\def\BibTeX{{\rm B\kern-.05em{\sc i\kern-.025em b}\kern-.08em
    T\kern-.1667em\lower.7ex\hbox{E}\kern-.125emX}}
\newcommand{\polo}{}
\newcommand{\louis}{}
\def\x{{\mathbf x}}
\def\C{\mathbb{C}}
\def\R{\mathbb{R}}
\newcommand{\vct}[1]{\bm{#1}}
\newcommand{\mtx}[1]{\bm{#1}}
\newcommand{\nat}[1]{{#1}^{\natural}}
\newcommand{\norm}[1]{\left\lVert#1\right\rVert}
\newcommand{\abs}[1]{\left\lvert#1\right\rvert}
\DeclareMathOperator*{\diag}{diag}
\DeclareMathOperator{\Tr}{trace}
\def\xvbar{\vct{x}_{\bar{v}}}
\def\Phivbar{\mtx{\Phi}_{\bar{v}}}
\def\xvnat{\nat{\vct{x}}_v}
\def\H{\mathsf{H}}
\begin{document}

\title{Signal Inpainting from Fourier Magnitudes\thanks{This work was made with the support of the French National Research Agency through project DENISE (ANR-20-CE48-0013), and was conducted while L. Bahrman was an intern with Inria in Nancy, France.}}

\author{
\IEEEauthorblockN{Louis Bahrman\textsuperscript{1,2}}
\IEEEauthorblockA{\textsuperscript{1}\textit{LTCI, Télécom Paris, Institut Polytechnique de Paris} \\
Paris, France \\
louis.bahrman@telecom-paris.fr}
\and
\IEEEauthorblockN{Marina Krémé\textsuperscript{2}, Paul Magron\textsuperscript{2}, Antoine Deleforge\textsuperscript{2}}
\IEEEauthorblockA{\textsuperscript{2}\textit{Université de Lorraine, CNRS, Inria, LORIA} \\
Nancy, France \\
\{ama-marina.kreme, paul.magron, antoine.deleforge\}@inria.fr}
}

\maketitle

\begin{abstract}
Signal inpainting is the task of restoring degraded or missing samples in a signal. In this paper we address signal inpainting when discrete Fourier magnitudes are observed. We propose a mathematical formulation of the problem that highlights its connection with phase retrieval, and we introduce two methods for solving it. First, we derive an alternating minimization scheme, which shares similarities with the Gerchberg-Saxton algorithm, a classical phase retrieval method. Second, we propose a convex relaxation of the problem, which is inspired by recent approaches that reformulate phase retrieval into a semi-definite program. We assess the potential of these methods for the task of inpainting gaps in speech signals. Our methods exhibit both a high probability of recovering the original signals and robustness to magnitude noise.
\end{abstract}

\begin{IEEEkeywords}
Signal inpainting, phase retrieval, audio restoration, convex relaxation, alternating minimization.
\end{IEEEkeywords}

\IEEEpeerreviewmaketitle

\section{Introduction}
\label{sec:intro}

Signal inpainting \cite{adler2012inpainting} is an inverse problem that consists in restoring signals degraded by sample loss. Such a problem typically arises as a result of degradation during signal transmission (packet loss concealment~\cite{Rodbro2006}) or in digitization of physically degraded media.
Inpainting can also be used to restore signal samples subject to a degradation so heavy that the information about the samples can be considered lost (e.g., signal clipping~\cite{Chantas2018} or impulsive noises~\cite{Derebssa2018}). More specifically, in this paper we focus on inpainting compact gaps, which occurs, e.g., when an audio signal is corrupted with clicks~\cite{magron2015linearunwrapping}.

Approaches that tackle this issue can be divided into two categories depending on the number of missing  samples or duration of the gaps. When considering short gaps (less than $100$ ms), approaches based on autoregressive modeling~\cite{janssen1986interpautoregressive}, convex optimization~\cite{Taubock2021}, sparse modeling~\cite{Mokry2019}, or Bayesian estimation \cite{godsill1995bayesianrestauration} have shown promising results. Conversely, approaches based on sinusoidal modeling \cite{lagrange2005long} or graphs \cite{perraudin2018inpaintinggraphs} are more suitable for longer gaps (more than $100$ ms). However, signal inpainting remains a challenging problem, and these approaches are not adapted to scenarios where some additional information about the signal is available.

We focus here on a setting where the missing gap is contained in a short context window, while the \textit{discrete Fourier magnitudes} of the complete signal on that segment are observed. A similar magnitude-informed setting was notably studied in~\cite{Deleforge2017} and~\cite{liutkus2018audio} in the different context of source separation. Beyond the fact that \louis{this} problem is open and \louis{has} not received specific attention in the literature, the motivation behind studying it is the existence of a vast literature dedicated to modeling and processing short-term Fourier magnitudes in the audio literature, e.g., using nonnegative matrix factorization~\cite{Fevotte2009} or more recently variational auto-encoders~\cite{girin2019notes}. This is because the Fourier magnitudes of natural signals tend to exhibit smoother and hence more predictable evolution than their respective Fourier phases. 
\polo{
In this work, we assume that such a magnitude model has been leveraged beforehand, and we focus on the latest part of the inpainting problem, thus considering that some magnitude estimates are available, possibly up to errors.
}

The problem then shares a close connection with phase retrieval~\cite{walther63}, the task of retrieving a signal from nonnegative measurements (usually magnitudes of a set of inner products). From the seminal early works of Gerchberg and Saxton~\cite{Gerchberg1972}, this task has been revived over the last decade with the development of novel optimization approaches based on gradient descent~\cite{WF} or convex relaxations~\cite{candes2013PhaseLift, waldspurger2015phase}. While phase retrieval has experienced considerable progress in recent years, its connection to signal inpainting is largely left to explore, from the theoretical, methodological and applicative standpoints.

This papers aims at contributing to bridging this gap by formulating signal inpainting from Fourier magnitudes as a constrained phase retrieval problem. Inspired by phase retrieval algorithms, we derive two methods to tackle it, based on alternating minimization (AM) and convex relaxation (CR). Experiments conducted on speech signals reveal the potential of these techniques for audio inpainting, since they exhibit a high probability of recovering the original signal, as well as some robustness to magnitude errors. The approach compares favorably to the sate-of-the-art sparsity-based inpainting method in \cite{Mokry2019} as long as magnitudes are observed with sufficient accuracy.

The rest of this paper is organized as follows. Section \ref{sec:methods} formulates the problem and introduces two methods for solving it. Experiments on speech signals are conducted in Section \ref{sec:experiments}. Finally, Section \ref{sec:conclusion} draws some concluding remarks.

\section{Methods}
\label{sec:methods}

\subsection{From inpainting to phase retrieval}
\label{sec:inpaintinToPR}

\begin{figure}[t]
    \centering
    \includegraphics[width=1\linewidth]{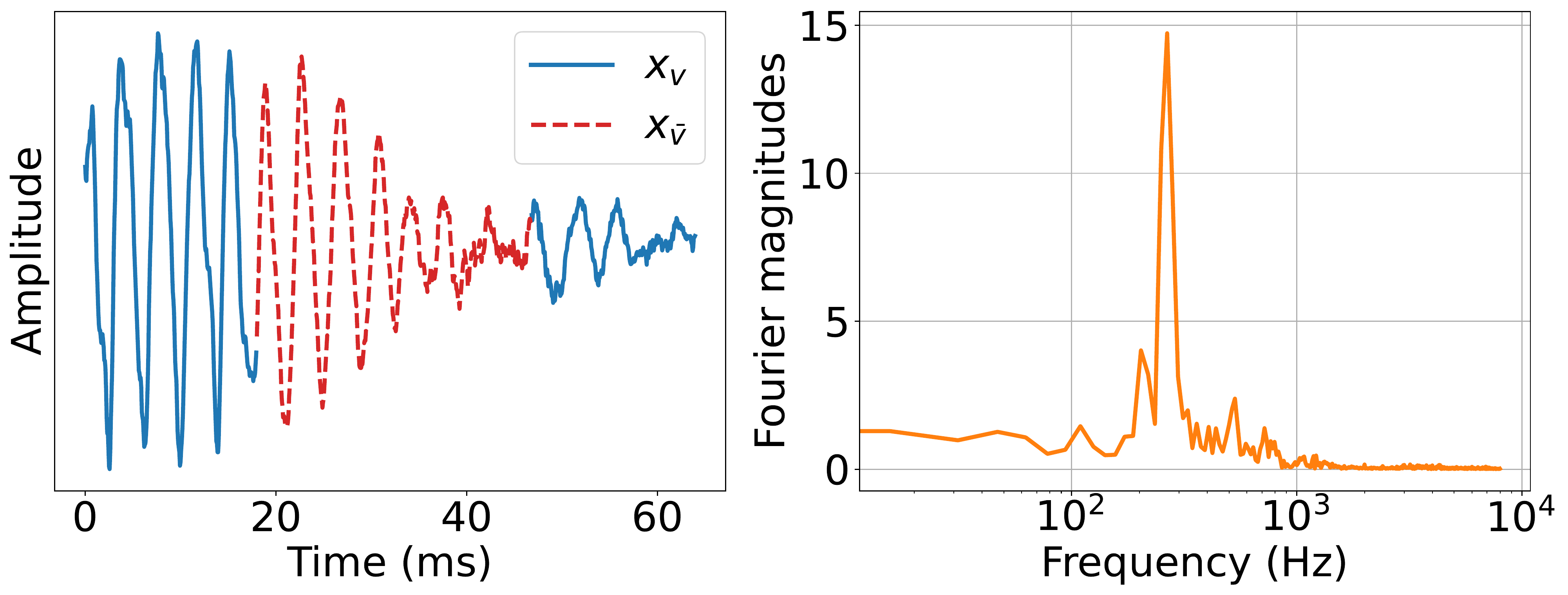}
    \vspace{-1.2mm}
    \caption{Illustration of restoring the missing samples of a signal (left), assuming its Fourier magnitudes (right) \louis{have been estimated beforehand}.\vspace{-5mm}}
    \label{fig:inpainting}
\end{figure}

Let $\vct{x}^{\natural} \in \R^L$ denote a signal. We partition its support $\{0,\dots,L-1\}$ into two sets $v$ and $\bar{v}$ such that $\xvnat \in \R^{L-d}$ and $\xvbar^{\natural} \in \R^{d}$ denote the observed and missing samples, respectively, and where $d \leq L$ denotes the number of missing samples, whose location $\bar{v}$ is assumed to be known. 
\louis{Note that the missing samples $\x_{\bar{v}}$ can be non-contiguous.}
We also assume that the magnitudes of its discrete Fourier transform (DFT) $\vct{b} \in \R_+^L$ are observed. We consider a \textit{complete} DFT (that is, it does not use zero-padding), whose matrix is denoted $\mtx{\Phi} \in \C^{L \times L} $.
The task of signal inpainting from Fourier magnitudes is illustrated in Fig.~\ref{fig:inpainting} and can be stated as:
\begin{equation}
\label{eq:formulationInpainting}
    \mathrm{Find} \quad {\vct{x} \in \R^{L}} \quad \text{such that}\;  \abs{\mtx{\Phi}  \vct{x}}  =  \vct{b} \; \text{and}\; \vct{x}_{v} = \xvnat.
\end{equation}
Let us formulate~\eqref{eq:formulationInpainting} as the following optimization problem:
\begin{equation}
\label{eq:inpainting_optim_problem}
	\min_{\vct{x} \in \R^{L}} \norm{ | \mtx{\Phi} \vct{x} | - \vct{b} }^2 \quad \text{s. t.} \quad \vct{x}_{v}=\xvnat,
\end{equation}
where $\norm{.}$ denotes the Euclidean norm\footnote{Recent works such as~\cite{Li2021, Vial2021} have investigated alternative discrepancy measures for phase retrieval. We focus on the Euclidean norm in this study.}. Problem~\eqref{eq:inpainting_optim_problem} reads as a constrained phase retrieval problem. If the whole signal is missing (${d=L}$), the constraint vanishes and it reduces to Fourier phase retrieval. We theoretically show in a supplementary material that for almost all $\vct{x}^{\natural}\in \R^L$, this problem admits a \textit{unique solution} if at most $33\%$ of the signal is missing \cite{kreme2023inpainting}.
Note that Fourier phase retrieval has been extensively investigated from a theoretical perspective; we refer the interested reader to the book chapter~\cite{Bendory2017} for a comprehensive review on uniqueness guarantees. The cases where one sample or half of the samples at the end of the signal are observed were respectively covered some time ago in~\cite{xu1987almost} and in~\cite{Nawab1983}. However, both studies considered \louis{DFT with zero-padding.}
Our supplemental theorem complements these results by treating the more challenging 
\louis{unpadded} DFT case, using a different proof technique.

Notwithstanding our uniqueness result, the problem is non-convex and can be viewed as an instance of quadratic programming, which is known to be NP-hard in general \cite{audet2000branch}. As pointed out in the review \cite{Bendory2017}, \textit{``there is }[currently]\textit{ no algorithm that knows how to exploit the given entries to recover the complete signal in a stable and efficient manner"}. This highlights the need for efficient algorithms that provide a solution. The cornerstone of our approaches lies in introducing an auxiliary phase variable $\vct{u} \in \C^L$ such that $\abs{\vct{u}}=1$. Then, ${\vct{b} = |\diag(\vct{b}) \vct{u}|}$, where $\diag(\vct{b})$ is the diagonal matrix whose entries are given by the vector $\vct{b}$. We then turn our attention to minimizing the following auxiliary function, which is \textit{exact} with respect to \eqref{eq:inpainting_optim_problem} in the sense of \cite{yevtushenko1990exact}:
\begin{equation}
	\min_{\vct{x} \in \R^{L}, \vct{u} \in \C^{L}} \norm{  \mtx{\Phi} \vct{x} - \diag(\vct{b}) \vct{u} }^2 \; \text{s.t.} \; \vct{x}_{v}=\xvnat,  \; \abs{\vct{u}} = 1.
	\label{eq:inpainting_prob_formulation_ux}
\end{equation}

\subsection{Alternating minimization}
The first approach we propose to solve~\eqref{eq:inpainting_prob_formulation_ux} is an AM scheme. Let us first fix $\vct{u}$ and derive the update for $\vct{x}$, for which we propose to incorporate the constraint $\vct{x}_{v}=\xvnat$ within the loss function. To that end, let us reorder $\vct{x}$ as ${\vct{x} = \begin{bmatrix}
\vct{x}_{\bar{v}} \\
\vct{x}_{v}
\end{bmatrix}}$ (and similarly for $\vct{x}^{\natural}$), and split $\mtx{\Phi}$ accordingly as $\mtx{\Phi} = [\mtx{\Phi}_{\bar{v}}, \mtx{\Phi}_{v}]$, with
$\mtx{\Phi}_{\bar{v}} \in \C^{L \times d}$ and $\mtx{\Phi}_{v} \in \C^{L \times (L-d)}$. Using these notations, we have ${\mtx{\Phi} \vct{x} = \mtx{\Phi}_{\bar{v}}\vct{x}_{\bar{v}}+ \mtx{\Phi}_{v}\vct{x}_v}$, and~\eqref{eq:inpainting_prob_formulation_ux} rewrites:
\begin{equation}
	\min_{\vct{x}_{\bar{v}} \in \R^{d}} \norm{  \mtx{\Phi}_{\bar{v}}\vct{x}_{\bar{v}}+ \mtx{\Phi}_{v}\vct{x}_v^{\natural} -  \diag(\vct{b}) \vct{u}}^2.
	\label{eq:AM_prob_x}
\end{equation}
Since $\mtx{\Phi}_{\bar{v}}$ is full-rank, it has a left inverse which is its Hermitian transpose $\mtx{\Phi}_{\bar{v}}^\H$. Besides, recall that since the DFT is an orthogonal transform, then $\Phivbar^{\H} \mtx\Phi_v = 0$. Altogether this yields the following solution to~\eqref{eq:AM_prob_x}:
\begin{equation}
	    \xvbar  = \mtx{\Phi}_{\bar{v}}^{\H}\diag(\vct{b})\vct{u}.
    \label{eq:AM_x}
\end{equation}
Let us now derive the update for $\vct{u}$ when $\vct{x}$ is fixed, for which~\eqref{eq:inpainting_prob_formulation_ux} rewrites:
\begin{equation}
	\min_{\vct{u} \in \C^{L}} \norm{  \mtx{\Phi} \vct{x} - \diag(\vct{b}) \vct{u} }^2 \quad \text{s. t.} \quad \abs{\vct{u}} = 1,
	\label{eq:AM_prob_u}
\end{equation}
which is straightforward to solve:
\begin{equation}
	    \vct{u} = \frac{\mtx{\Phi} \vct{x}}{\abs{ \mtx{\Phi} \vct{x}} }.
    \label{eqn:AM_solution_u}
\end{equation}
Alternating~\eqref{eq:AM_x} and~\eqref{eqn:AM_solution_u} yields a procedure summarized in Algorithm~\ref{al:AlternatedMinimization}. We discuss the initialization strategy in Section~\ref{sec:exp_setting}. Note that at line~\ref{al_line:projectrealAxes} we apply the real part function to ensure a real-valued signal estimate\footnote{It can be proven rigorously that doing so does not affect convergence guarantees as in~\cite{Vial2021}, but we do not detail this here for brevity.}.

\noindent \textit{Remark}: Algorithm~\ref{al:AlternatedMinimization} consists in computing the DFT of a signal, setting its magnitude to a target value, inverting the DFT, putting back the observed samples, and repeating. As such, it is similar to the Gerchberg-Saxton algorithm~\cite{Gerchberg1972} except that the signal-domain magnitude constraint is here replaced with a projection onto the partially observed samples.
Besides, Algorithm~\ref{al:AlternatedMinimization} can be seen as a member of the general family of \textit{constrained phase retrieval} algorithms reviewed in~\cite{Bendory2017}. While such algorithms are generally heuristically derived, the proposed alternate minimization scheme on an exact auxiliary function is guaranteed to converge - though not necessarily to a global minimum - by construction.

\begin{algorithm}[t]
	\caption{AM for signal inpainting}
	\label{al:AlternatedMinimization}
	\begin{algorithmic}[1]
		\REQUIRE
		$\begin{cases}
			\vct{b} \in \R^{L}_{+}:  \text{observations}, \; \nat{\vct{x}}_{v}: \text{known signal}\\
		    \mtx{\Phi}: \text{Fourier matrix}, \\
		\end{cases}$
		\STATE Initialize $\vct{x}^{(0)}_{\bar{v}}$  \quad \text{and} \quad $\vct{x}^{(0)} \gets \begin{bmatrix}
		 \vct{x}^{(0)}_{\bar{v}}\\ \vct{x}^{\natural}_{v}  \end{bmatrix}$
		 \label{initialisation}
		\STATE $i \gets 0$
		\WHILE{convergence not reached}
		\STATE  $\vct{u}^{(i+1)} \gets \frac{\mtx{\Phi} \vct{x}^{(i)}}{\abs{ \mtx{\Phi} \vct{x}^{(i)} }}$
		\label{estimatePhases}
		\STATE $\xvbar^{(i+1)} \gets \Re\left( \mtx{\Phi}_{\bar{v}}^{\H}\diag(\vct{b}) \vct{u}^{(i+1)}\right) $
		\label{al_line:projectrealAxes}
		\STATE $\vct{x}^{(i+1)} \gets \begin{bmatrix}
		\xvbar^{(i+1)} \\ \xvnat 
		\end{bmatrix}$
		 \label{signalDomainConstraint}
		\STATE $i \gets i+1$
		\ENDWHILE
	\ENSURE{Reconstructed signal $\vct{x}^{(i)}$}
	\end{algorithmic}
\end{algorithm}

\subsection{Convex relaxation}
 
Let us now derive a method inspired by the PhaseCut algorithm~\cite{waldspurger2015phase}, which is based on a CR of the problem. This approach consists in reformulating phase retrieval as a constrained trace minimization problem by lifting it to a higher dimensional space and relaxing the rank-one constraint. The resulting problem can then be efficiently solved via semi-definite programming. We consider the formulation~\eqref{eq:inpainting_prob_formulation_ux} in which we inject the expression of $\vct{x}$ given by~\eqref{eq:AM_x}:
\begin{equation}
	\min_{\vct{u} \in \C^{L}}{\norm{(\mtx{\Phi}_{\bar{v}}\mtx{\Phi}_{\bar{v}}^{\H}\! - \!\mtx{I})\diag(\vct{b})\vct{u} \!+ \!\mtx{\Phi}_{v}\vct{x}_v}^2}
	\; \text{s.t.} \; \abs{\vct{u}} \!= \!1.
	\label{eq:affinePhasecut}
\end{equation}
Now, let us introduce the following:
\begin{equation}
\widetilde{\vct{m}} := [(\mtx{\Phi}_{\bar{v}}\mtx{\Phi}_{\bar{v}}^{\H}\! - \!\mtx{I})\diag(\vct{b}),\mtx{\Phi}_{v}\vct{x}_v ]\; \text{and}\; \widetilde{\vct{u}} = \begin{bmatrix}
 \vct{u}\\
 1
 \end{bmatrix},
\label{eqn:mtilde}
\end{equation}
from which we can rewrite \eqref{eq:affinePhasecut} as:
\begin{equation}
	\min_{\widetilde{\vct{u}}\in \C^{L+1}}{\norm{\widetilde{\vct{m}} \widetilde{\vct{u}} } ^2 \quad \text{s.t.} \quad |\widetilde{\vct{u}}|= 1 \;  \text{and} \; \widetilde{\vct{u}}[L]=1.}
	\label{eqn:phaseMinimisation}
\end{equation}
Drawing on~\cite{waldspurger2015phase}, we lift and relax \eqref{eqn:phaseMinimisation} to the following convex problem:
\begin{equation}
\min_{\widetilde{\vct{U}}\in \C^{(L+1)\times (L+1)}}{\!\Tr{(\widetilde{\vct{M}}\widetilde{\vct{U}})\!} \quad \text{s.t.} \quad \diag{(\widetilde{\vct{U}})}=\boldsymbol{1}, \; \widetilde{\vct{U}} \succeq 0},
	\label{eqn:traceMinimisationRelaexed}
\end{equation}
where $\widetilde{\vct{M}} = \widetilde{\vct{m}}^{\H}\widetilde{\vct{m}} \in \C^{(L+1)\times (L+1)}$  and $\widetilde{\vct{U}} = \widetilde{\vct{u}}\widetilde{\vct{u}}^{\H}$. 
As in~\cite{waldspurger2015phase} and \cite{Deleforge2017}, problem~\eqref{eqn:traceMinimisationRelaexed} can be solved by means of a block coordinate descent algorithm, which we summarize in Algorithm~\ref{al:bcd} (lines \ref{al_line:bcd_deb} to \ref{al_line:bcd_fin}). It consists of a nested loop where at each iteration $i$, the columns of $\mtx{\widetilde{U}}$ are updated sequentially using the notation:
\begin{equation}
    k^{c} = \{ 0,\ldots,k-1, k+1, \ldots, L-1\}.
\end{equation}
This yields a global solution $\mtx{\widetilde{U}}$ to problem \eqref{eqn:traceMinimisationRelaexed}. If this solution is of rank 1, i.e., $\widetilde{\vct{U}} = \widetilde{\vct{u}}\widetilde{\vct{u}}^{\H}$, then $\widetilde{\vct{u}}[0,\dots,L-1]/\widetilde{\vct{u}}[L]$ is guaranteed to globally solve \eqref{eq:affinePhasecut}. We then obtain a global solution of \eqref{eq:inpainting_prob_formulation_ux} via:
\begin{equation}
    \xvbar =  \Phivbar^{\H}\diag{(\vct{b})} \widetilde{\vct{u}}[0,\ldots,L-1]/\widetilde{\vct{u}}[L].
\end{equation}
However, $\mtx{\widetilde{U}}$ needs not be rank-1 in general. Hence, as commonly done in semi-definite relaxations, its closest rank-1 approximation is used in practice, which by the Eckart-Young-Mirsky theorem amounts to setting $\widetilde{\vct{u}}$ to the eigenvector associated to the largest eigenvalue of $ \mtx{\widetilde{U}}$. 
\begin{algorithm}[t]
	\caption{CR for signal inpainting}
	\label{al:bcd}
	\begin{algorithmic}[1]
		\REQUIRE
		$\begin{cases}
			\vct{b} \in \R^{L}_{+}:  \text{observations},\; \xvnat: \text{known signal} \\
			 \mtx{\Phi}: \text{Fourier matrix},\; \nu \geq 0: \text{barrier parameter}\\
			\widetilde{\vct{m}} \in \C^{L\times (L+1)}: \text{matrix defined by \eqref{eqn:mtilde}}\\
		    n_\text{iter}: \text{number of iterations}
		    \end{cases}$
		\STATE $ \widetilde{\vct{M}} \gets \widetilde{\vct{m}}^{\H}\widetilde{\vct{m}}$ \quad and \quad $\mtx{\widetilde{U}}^{(0)} \gets \mtx{I}$ \label{al_line:bcd_deb}
		\FOR{$i = 1, \ldots,n_\text{iter} $} 
		\FOR{$k = 0,\ldots, L-1$}
		\STATE $\vct{z} \gets  \widetilde{\mtx{U}}^{(i)}_{k^c,k^c} \widetilde{\mtx{M}}_{k^c,k}$ \quad and \quad  $\gamma \gets \vct{z}^{\H} \widetilde{\mtx{M}}_{k^c,k}$
		\IF{$\gamma > 0$}
		\STATE $\widetilde{\mtx{U}}^{(i+1)}_{k^c,k} \text{and} \left({\widetilde{\mtx{U}}^{(i+1)}}_{k^c,k} \right)^{\H} \!\gets \!-\sqrt{\frac{1-\nu}{\gamma}} \vct{z}$\\  
		\ELSE
		\STATE $\widetilde{\mtx{U}}^{(i+1)}_{k^c,k} \text{and} \left({\widetilde{\mtx{U}}^{(i+1)}}_{k^c,k} \right)^{\H} \gets 0$
		\ENDIF
		\ENDFOR
		\ENDFOR \label{al_line:bcd_fin}
		\STATE $\widetilde{\vct{u}}$ = eigenvector associated to the largest eigenvalue of $\widetilde{\mtx{U}}^{(i+1)}$, and $\widetilde{\vct{u}} \gets \frac{\widetilde{\vct{u}}}{\widetilde{\vct{u}}[L-1]}$ \\
		\STATE $\xvbar \gets \Re\left(\Phivbar^{\H}\diag{(\vct{b})} \widetilde{\vct{u}}[:L-1] \right)$  and $\vct{x} = \begin{bmatrix} 
			\xvbar \\
			\xvnat
		\end{bmatrix}$
	\ENSURE{Reconstructed signal $\vct{x}$}
	\end{algorithmic}
\end{algorithm}

\section{Experiments}
\label{sec:experiments}

In this section we assess the potential of our methods for inpainting gaps in audio signals. Our code is available online.\footnote{\scriptsize \url{https://github.com/Louis-Bahrman/Inpainting-Fourier}}

\subsection{Experimental setting}
\label{sec:exp_setting}

\paragraph{Data}
We consider $100$ speech signals from the Librispeech dataset \cite{Panayotov2015}. Signals are sampled at $16$ kHz. For each signal, we extract a non-silent sub-signal of variable length $L$ at a random location, where we create a gap of $d$ samples.

\paragraph{Proposed methods}
For the AM algorithm, iterations stop when the the maximum loss variation over the previous $5$ iterations does not exceed $10^{-10}$, or when a maximum number of $1000$ iterations is reached. Preliminary experiments revealed no significant difference between various basic initialization schemes, e.g., using a random or zero phase in the Fourier domain, or using a random or zero missing signal in the time domain. We also derived an initialization scheme inspired by the spectral initialization method in~\cite{candes2015wirtinger}, but this did not yield any significant improvement either. Therefore, the results displayed hereafter use $\xvbar^{(0)}\leftarrow \vct{0}$.
The CR algorithm uses a fixed amount $n_\text{iter} = 10$ since the performance did not show further improvement beyond in our experiments. We simply set the barrier parameter $\nu$ to $0$ as in~\cite{Deleforge2017}. We also consider a \textit{combined} CR+AM algorithm. This technique consists in first estimating the signal with CR, and then using this estimate as an initialization for AM, with the same stopping criterion as above. Our methods are fed with the ground truth magnitudes ($\vct{b} = \abs{\mtx\Phi \vct{x}^{\natural}} $), except in the last experiment where noisy magnitudes are considered.

\paragraph{Baselines}
Let us note that typical phase retrieval methods~\cite{Gerchberg1972, waldspurger2015phase} are not appropriate comparison baselines since they are agnostic to the observed samples. Besides, more recent phase retrieval approaches proposed in audio~\cite{Vial2021,Masuyama2019} are tailored to time-frequency processing, and are therefore not suitable for the setup considered in this paper. On the other hand, traditional audio restoration techniques such as~\cite{janssen1986interpautoregressive} do not leverage magnitude knowledge, making comparisons unfair. Nevertheless, results obtained using the Fourier-sparsity-based method SPAIN~\cite{Mokry2019} are shown, as an indication of the potential of exploiting magnitude models beyond sparsity. The same parameters as in~\cite{Mokry2019} are used, except for the relaxation parameter $r$ which is set to $32$.

\paragraph{Metric}
To assess the quality of the reconstruction, we resort to the signal-to-error ratio (SER) expressed in dB:
\begin{equation}
    \text{SER}(\xvbar,\xvbar^{\natural} ) = 10 \log_{10}\left({\frac{\norm{\xvbar^{\natural}}^2}{\norm{\xvbar - \xvbar^{\natural}}^2}}\right),
 \label{eqn:ser}
\end{equation}
where $\vct{x}$ denotes the estimated signal (higher is better). Let us outline that the SER is only calculated over the set of missing samples $\bar{v}$ since $\vct{x}_{v} = \vct{x}^{\natural}_{v}$ elsewhere. Note that AM's initialization (replacing the missing signal with zeros) corresponds to an SER of $0$. \louis{Based on preliminary listening tests,} we consider that perfect reconstruction is achieved when the SER is greater than $20$~dB.

\subsection{Results}

\begin{figure}[t]
    \centering
    \includegraphics[width=0.8\linewidth]{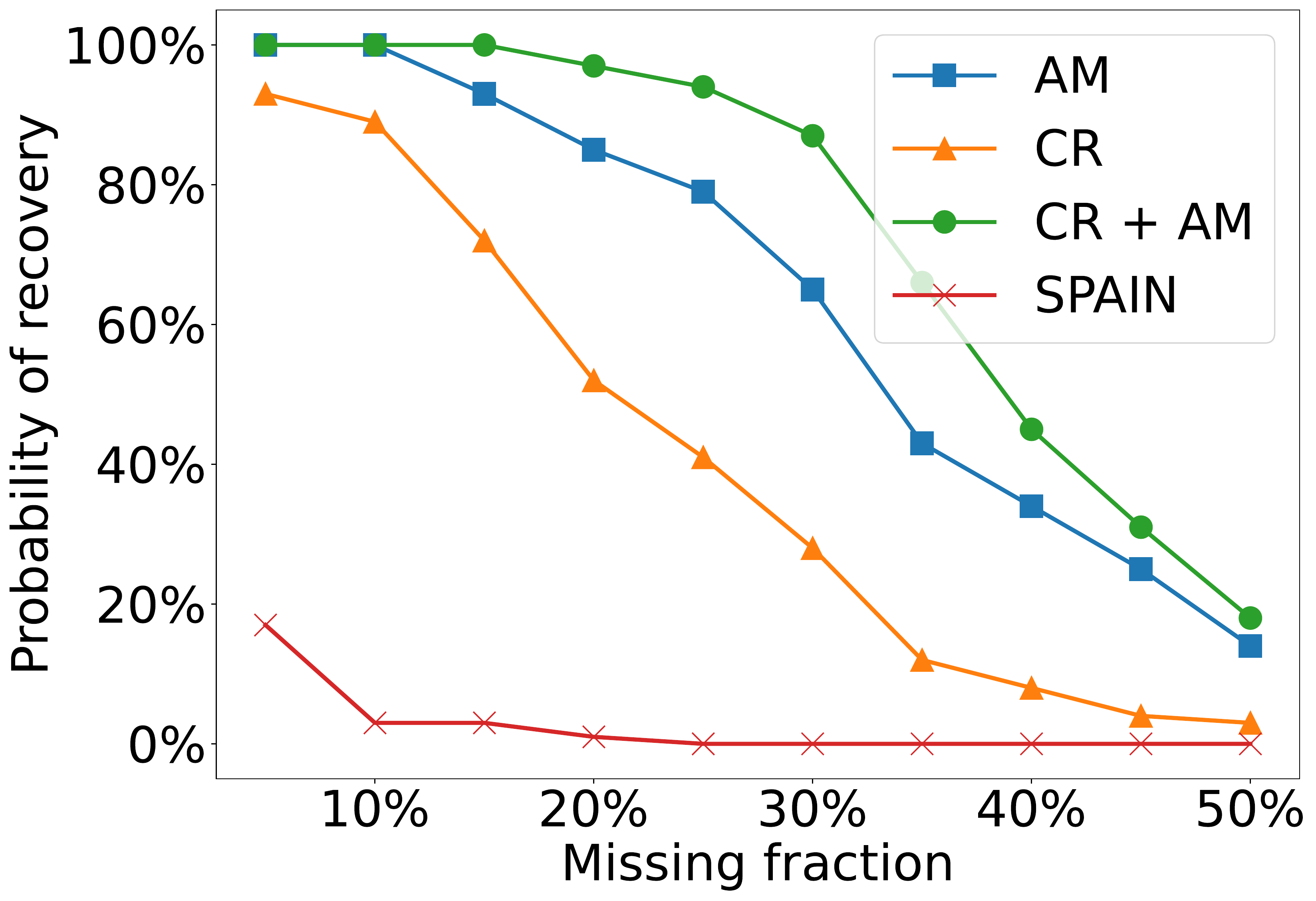}
    \vspace{-3mm}
    \caption{Influence of the missing fraction onto performance.}
   \label{fig:comparisonMethods}
\end{figure}

First, we compare the recovery rate of the three proposed methods with respect to the missing signal fraction. The results corresponding to $L=1024$ (64~ms) are displayed in Fig.~\ref{fig:comparisonMethods}. Our methods significantly outperform SPAIN in this setting, regardless of the missing fraction. This demonstrates the potential of exploiting Fourier magnitudes when these are available. CR+AM achieves the best results and consistently outperforms the other approaches. This is explained by the ability of CR to provide a solution that is more likely to converge to a global minimum than AM's basic initialization. While all the methods exhibit a performance drop when the missing fraction increases, CR+AM and AM still yield perfect reconstruction in at least $80$~\% of the cases when less than $30$~\% of the signal is missing. 
\louis{
The results of CR+AM have to be weighted against the higher memory and time complexity of CR, which processes matrices of size $L\times L$ instead of vectors of size $L$.
Accordingly, 
a general guideline might be to resort to the AM method when very few samples are missing ($5$ to $10$ \%), and to turn to CR+AM in all remaining scenarios, at a higher computational cost.
}

Then, Fig.~\ref{fig:influenceParameters} shows the recovery rate achieved by the best performing method CR+AM as a function of the signal length $L$ (which ranges from $128$ to $4096$ samples) and the fraction of missing samples $d/L$ (which ranges from $5$~\% to $50$~\%). We observe that perfect reconstruction is achieved in most cases when the fraction is lower than $33$~\% and ${L\le1024}$, suggesting that the approach performs near the theoretical optimum of~\cite{kreme2023inpainting} in this regime. While for short signals the method even yields satisfactory results under slightly larger missing fractions, its performance eventually drops for larger signals. This may be explained by the increased dimensionality of the problem which may increase the risk of getting trapped in local minima.

\begin{figure}[t]
    \centering
    \includegraphics[width=0.8\linewidth]{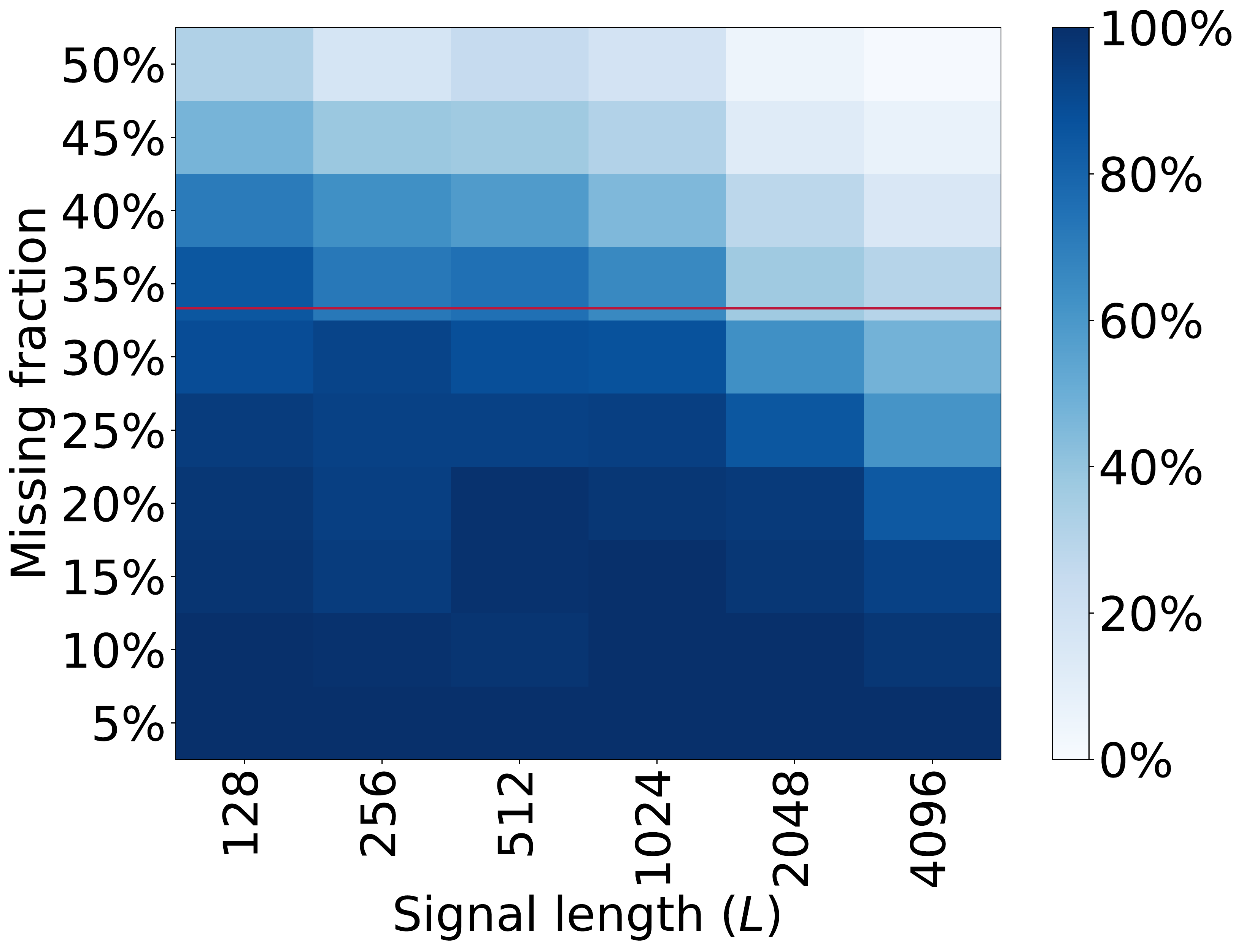}
    \vspace{-4mm}
    \caption{Probability of achieving perfect reconstruction for CR+AM, as a function of the signal length (in samples) and the missing signal fraction. The red line represents the $33$ \% theoretical limit under which perfect reconstruction is achievable for almost all signals~\cite{kreme2023inpainting}.}
    \label{fig:influenceParameters}
\end{figure}

\begin{figure}[t]
    \centering
    \includegraphics[width=0.8\linewidth]{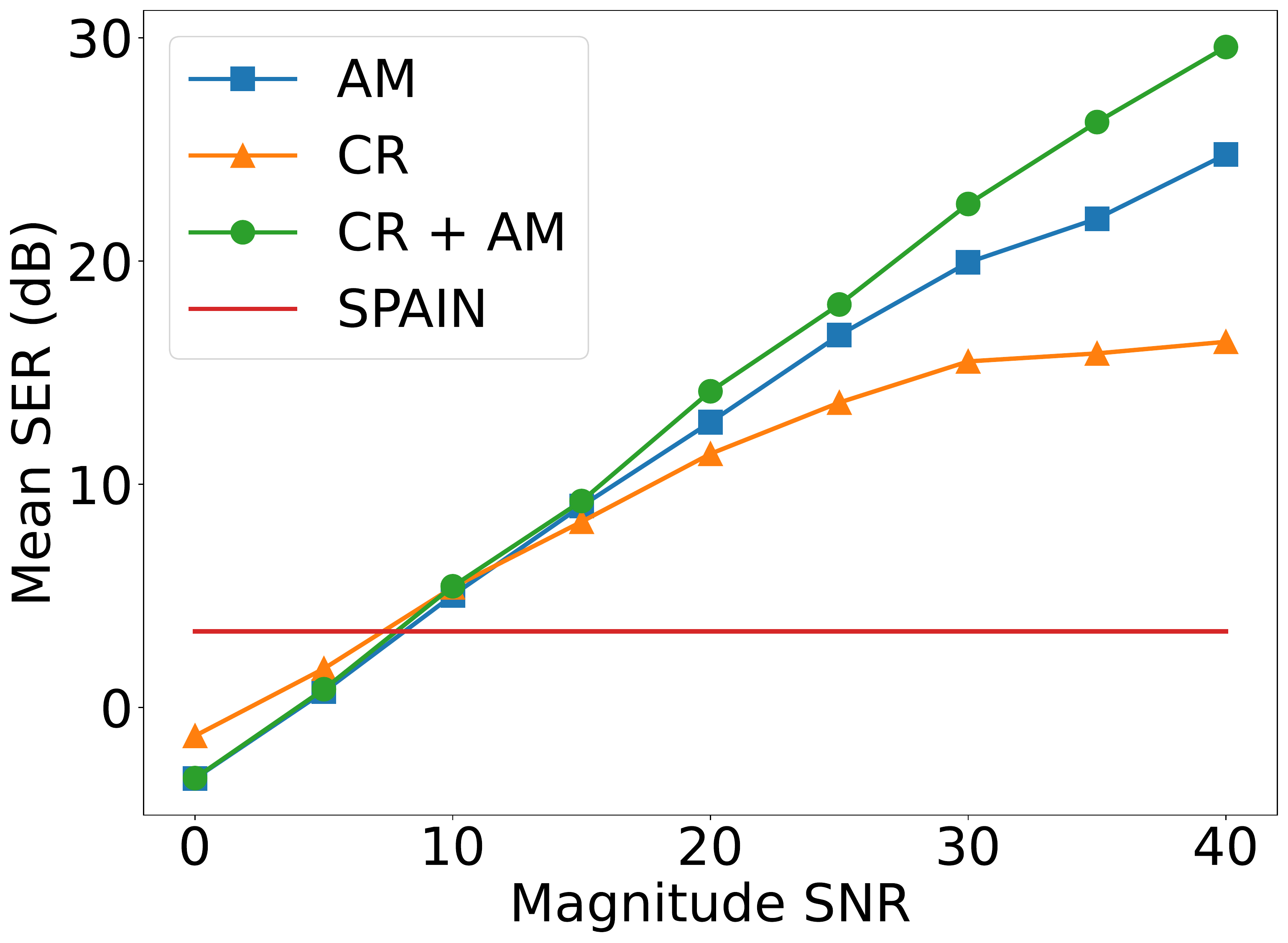}
    \vspace{-3mm}
    \caption{Influence of the magnitude noise onto performance.}
   \label{fig:noiseRobustness}
\end{figure}
Finally, let us evaluate the robustness of our methods to magnitude noise. Indeed, in the previous experiments we considered oracle magnitudes, but in practical applications these are estimated beforehand and therefore contaminated by estimation errors. To simulate such a scenario, we consider noisy magnitudes ${\vct{b} = \max{(0, \abs{\mtx\Phi \vct{x}^{\natural}} + \vct{n})}}$, where $\vct{n}$ is a white Gaussian noise whose variance is adjusted to fit a given magnitude signal-to-noise (SNR) value. The results are presented in Fig.~\ref{fig:noiseRobustness}, where the signal length is fixed at $L=1024$ and the fraction of missing signal at $25 \%$. Overall, we observe that all of our methods exhibit a similar linear decay in performance in the log-log plot when the SNR falls below $20$ dB, suggesting an encouraging robustness to errors. Interestingly, while CR performs worse than the other methods at higher SNRs, it still yields solutions that provide a better initialization than zero to AM.
Finally, we remark that SPAIN, which leverages Fourier-domain sparsity, outperforms our methods at low SNRs, while the opposite is true for SNRs above $8$~dB. This confirms that leveraging Fourier magnitude is promising, provided they have been accurately estimated beforehand. This motivates future research on magnitude estimation techniques.

\section{Conclusion}
\label{sec:conclusion}

We have investigated the problem of signal inpainting from observed Fourier magnitude measurements. After formulating the associated optimization problem, we proposed two estimation methods based on alternating minimization and convex relaxation. Experiments in the noiseless regime highlighted that combining the two methods yields near-exact reconstruction most of the time when less than $33\%$ of the signal is missing, closely following our theoretical limit. Further experiments demonstrated a relative robustness of the methods to magnitude errors. Future work will therefore focus on applying such techniques to more realistic settings where magnitudes are estimated, e.g., via light magnitude interpolation methods in the short-time Fourier domain or via data-driven models such as deep neural networks.

\bibliographystyle{IEEEtran}
\bibliography{biblio.bib}

\end{document}


\maketitle

\begin{abstract}
This document takes interest in signal inpainting from Fourier magnitudes. This task consists in reconstructing consecutive missing samples in a finite discrete 1D signal, while assuming the magnitudes of its Fourier transform are known. In this report, we theoretically show that for almost all signals of length $L$, this problem admits a unique solution if at most $(L-1)/3$ samples are missing.
\end{abstract}

\section{Introduction and problem setting}
Signal inpainting \cite{adler2012inpainting} is an inverse problem that consists in restoring signals degraded by sample loss. This problem typically arises as a result of degradation during signal transmission, digitization of physically degraded media, or degradation so heavy that the information about the samples can be considered lost~\cite{Rodbro2006,Chantas2018,Derebssa2018}. Let $\vct{x} \in \R^L$ be a signal. Let $\bar{v}\subset\{0,\dots,L-1\}$ denote a set of consecutive indices corresponding to missing samples in $\vct{x}$ and $v$ denote its complement, \textit{i.e.}, the set of indices corresponding to observed samples. We denote by $\xvbar\in\R^d$ the sub-signal of $\vct{x}$ restricted to missing samples and $\vct{x}_v\in\R^{L-d}$ the sub-signal restricted to observed samples. 
We denote by  $\vct{b} \in \R_+^L$ the magnitudes of the discrete Fourier transform (DFT) of $\vct{x}$, \textit{i.e.}, $\vct{b} = |\mtx{\Phi} \vct{x}|$, where $\mtx{\Phi} \in \C^{L \times L} $ is the DFT matrix. For a given observed signal $\vct{y}\in\R^{L-d}$ and Fourier magnitudes $\vct{b}$, the task of signal inpainting from Fourier magnitudes can then be stated as:
\begin{equation}
\label{eq:formulationInpainting}
    \mathrm{Find} \; \vct{u}\in\R^{d} \; \text{such that (s.t.)} \; \abs{\mtx{\Phi}  \vct{x}}  =  \vct{b} \; \text{with} \; \xvbar=\vct{u} \; \mathrm{and} \; \vct{x}_{v}=\vct{y}. 
\end{equation}
%
We focus on the situation where the given vector $\vct{b}$ corresponds to the true magnitudes of the Fourier transform of a completed signal $\vct{x}$. Hence, the existence of at least one solution of \eqref{eq:formulationInpainting} is guaranteed. In this document, we will show that when $d<(L-1)/3$, this solution is unique for \textit{almost all} signals $\vct{x}\in \R^L$. We use a dimension counting argument, similar in spirit to the one employed by \cite{jaganathan2017sparse} in the context of sparse phase retrieval. Specifically, we show that signals $\vct{x}$ for which more than one solution exists, referred to hereinafter as \textit{counter examples}, necessary lie on a manifold of $\R^L$ with strictly less than $L$ degrees of freedom. They hence form a set of measure zero, \textit{i.e.}, a null set.

\section{Almost uniqueness: statement and proof}
We will assume throughout this section that the missing samples are placed at the beginning of $\vct{x}$, allowing us to write $\vct{x}=[\vct{u} ; \vct{y}]$ where $[\cdot;\cdot]$ denotes vertical concatenation. This comes without loss of generality, because for any counter-example signal $\widetilde{\vct{x}}$ with consecutive missing samples placed anywhere in the signal, one can construct a counter example with samples placed at the beginning of the signal (and reciprocally) by a simple circular shift of $\widetilde{\vct{x}}$. Indeed, a circular shift does not affect DFT magnitudes. We prove the following theorem:

\begin{theorem}
Let $\mathcal{E}=\bigl\{\vct{x}=[\vct{u};\vct{y}] \in \R^L \;|\; \exists \vct{v} \in\R^d, \vct{v}\ne \vct{u}, \; \textrm{s.t.} \; \abs{ \mtx{\Phi}  [\vct{u};\vct{y}] } = \abs{ \mtx{\Phi}  [\vct{v};\vct{y}] } \bigr\}$. For $d<(L-1)/3$, $\mathcal{E}$ is a manifold of $\R^L$ with strictly less than $L$ degrees of freedom.
\end{theorem}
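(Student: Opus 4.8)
The plan is to turn the magnitude constraints into polynomial equations, present $\mathcal{E}$ as the image of their solution set under a linear projection, and bound the dimension of that set by a parameter count in which the hypothesis $d<(L-1)/3$ is used exactly once. Concretely: $|\mtx{\Phi}\vct{x}|^{2}$ is the DFT of the circular autocorrelation $\vct{a}\in\R^{L}$, $a_{k}=\sum_{n}x_{n}x_{(n+k)\bmod L}$, and $a_{k}=a_{L-k}$; hence two real signals have the same Fourier magnitudes if and only if their $\lfloor L/2\rfloor+1$ independent autocorrelation coefficients agree. Writing $\vct{e}=\vct{u}-\vct{v}$ and $\Gamma(\vct{u},\vct{v},\vct{y})=\bigl(a_{k}([\vct{u};\vct{y}])-a_{k}([\vct{v};\vct{y}])\bigr)_{k=0}^{\lfloor L/2\rfloor}$, the set $\mathcal{E}$ is the image of $\widetilde{\mathcal{E}}:=\Gamma^{-1}(\vct{0})\cap\{\vct{u}\neq\vct{v}\}\subseteq\R^{L+d}$ under the linear map $(\vct{u},\vct{v},\vct{y})\mapsto[\vct{u};\vct{y}]$. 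A projection cannot raise dimension, so it suffices to show $\dim\widetilde{\mathcal{E}}<L$; stratifying $\widetilde{\mathcal{E}}$ by the rank of the Jacobian of $\Gamma$ (Sard's theorem and the constant-rank theorem) then also exhibits $\mathcal{E}$ as a finite union of embedded submanifolds, which is the ``manifold with strictly fewer than $L$ degrees of freedom'' of the statement.

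The crucial observation is that $\vct{e}$ is supported on the $d$ consecutive indices $\{0,\dots,d-1\}$ with $d<L/3<L/2$. For every \emph{middle} lag $k\in\{d,\dots,L-d\}$ the quadratic term $\sum_{n}e_{n}e_{(n+k)\bmod L}$ therefore vanishes, and since $[\vct{u};\vct{y}]$ and $[\vct{v};\vct{y}]$ agree outside $\{0,\dots,d-1\}$ while the indices $m+k$ and $m+L-k$ for $0\le m\le d-1$ all land in the observed block, the equation $\Gamma_{k}=0$ collapses to the \emph{linear} relation $\sum_{m=0}^{d-1}e_{m}\,(x_{m+k}+x_{m+L-k})=0$ in $\vct{e}$, whose coefficients are built from the observed samples $\vct{y}$ alone. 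Assembling these for $k=d,\dots,L-d$ gives a homogeneous linear system $\mtx{M}(\vct{y})\,\vct{e}=\vct{0}$ with $L-2d+1$ equations, the matrix $\mtx{M}(\vct{y})$ itself depending linearly on $\vct{y}$. The equations that remain — lags $1,\dots,d-1$ together with the lag-$0$ identity $\norm{\vct{u}}^{2}=\norm{\vct{v}}^{2}$ — are, for fixed $(\vct{e},\vct{y})$, affine in $\vct{v}$, so they pin $\vct{v}$ down to a finite (possibly empty) set.

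Counting dimensions then proceeds as follows. By the previous remark, $\dim\widetilde{\mathcal{E}}\le\dim\bigl\{(\vct{e},\vct{y}):\mtx{M}(\vct{y})\vct{e}=\vct{0},\ \vct{e}\neq\vct{0}\bigr\}$. Since $d<(L-1)/3$ we have $L-2d+1>d$, so $\mtx{M}(\vct{y})\vct{e}=\vct{0}$ is overdetermined in $\vct{e}$: for $\vct{y}$ outside a proper algebraic subvariety $\mathcal{Y}\subsetneq\R^{L-d}$, of codimension at least $(L-2d+1)-(d-1)=L-3d+2\ge 1$, the matrix $\mtx{M}(\vct{y})$ has full column rank $d$ and the only solution is $\vct{e}=\vct{0}$, which is excluded on $\widetilde{\mathcal{E}}$. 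Hence the $\vct{y}$-coordinate of any point of $\widetilde{\mathcal{E}}$ lies in $\mathcal{Y}$, and along $\mathcal{Y}$ the vector $\vct{e}$ is confined to $\ker\mtx{M}(\vct{y})$ — a line on the generic stratum of $\mathcal{Y}$, a higher-dimensional space on the lower strata but then with $\vct{y}$ correspondingly more constrained. A stratum-by-stratum tally gives $\dim\widetilde{\mathcal{E}}\le\dim\mathcal{Y}+\dim\ker\mtx{M}(\vct{y})\le(2d-2)+1=2d-1<L$, whence $\dim\mathcal{E}<L$.

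The substance is entirely in this last step, and it cannot be carried out by a bare ``unknowns minus equations'' count, because $\Gamma^{-1}(\vct{0})$ contains the trivial diagonal $\{\vct{u}=\vct{v}\}$, a full $L$-dimensional component on which every equation holds identically. One must genuinely show that \emph{away from the diagonal} the middle-lag system is nondegenerate — that the $L-2d+1>d$ windows of $\vct{y}$ (and their reflections) forming the rows of $\mtx{M}(\vct{y})$ are in general linear position for generic $\vct{y}$, so that $\mathcal{Y}$ really is a proper subvariety of the expected codimension — and then keep track of how the ranks of $\mtx{M}(\vct{y})$, and of the residual affine system in $\vct{v}$, vary along each stratum of $\mathcal{Y}$, checking that the exceptional loci are as small as claimed. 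This rank bookkeeping is the only real obstacle, and it is exactly where the bound $d<(L-1)/3$ is consumed.
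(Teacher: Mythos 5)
Your reduction is, up to a Fourier transform, the same as the paper's: writing $\abs{\mtx{\Phi}\vct{x}}^2-\abs{\mtx{\Phi}\vct{x}'}^2$ through the circular autocorrelation and symmetrizing in $\vct{e}=\vct{u}-\vct{v}$ and $[\vct{u}+\vct{v};2\vct{y}]$ is exactly the inverse DFT of the paper's identity $\mathcal{R}\bigl(\overline{\mtx{\Phi}[\vct{u}-\vct{v};\vct{0}_{L-d}]}\odot\mtx{\Phi}[\vct{u}+\vct{v};2\vct{y}]\bigr)=\vct{0}_L$. The gap is in the dimension count. You assemble the middle lags $k=d,\dots,L-d$ into $L-2d+1$ linear equations on $\vct{e}$ and deduce from $L-2d+1>d$ that the system is overdetermined, forcing $\vct{y}$ into a proper subvariety $\mathcal{Y}$. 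But, as you yourself note two sentences earlier, only the lags $k=0,\dots,\lfloor L/2\rfloor$ are independent: since $a_k=a_{L-k}$ identically, the rows of $\mtx{M}(\vct{y})$ for lags $k$ and $L-k$ are equal (the coefficient vectors $(x_{m+k}+x_{m+L-k})_m$ and $(x_{m+L-k}+x_{m+k})_m$ coincide). So $\mtx{M}(\vct{y})$ has at most $\lfloor L/2\rfloor-d+1$ distinct rows, and full column rank $d$ is possible only when $d\le(\lfloor L/2\rfloor+1)/2\approx L/4$, a strictly stronger requirement than $d<(L-1)/3$. In the range $L/4\lesssim d<(L-1)/3$ (take $L=100$, $d=32$: $19$ distinct equations against $32$ unknowns) the generic kernel of $\mtx{M}(\vct{y})$ has positive dimension at least $2d-\lfloor L/2\rfloor-1$, your $\mathcal{Y}$ is all of $\R^{L-d}$ rather than a proper subvariety, and the bound $\dim\widetilde{\mathcal{E}}\le 2d-1$ collapses. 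One can still try to conclude from $\dim\widetilde{\mathcal{E}}\le L-\mathrm{rank}\,\mtx{M}(\vct{y})$, but that requires a positive lower bound on the rank uniformly over the strata plus the finiteness of the $\vct{v}$-fiber, neither of which you establish.

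The repair is to fiber in the opposite direction, which is what the paper does: fix $\vct{e}=\vct{a}\ne\vct{0}_d$ (contributing $d$ degrees of freedom) and count the constraints imposed on $\vct{w}=[\vct{u}+\vct{v};2\vct{y}]$. In the Fourier domain each frequency $f\in\{1,\dots,L/2-1\}$ with $\hat{a}(f)\ne0$ imposes one real phase constraint on $\hat{w}(f)$, and since the $z$-transform of $[\vct{a};\vct{0}_{L-d}]$ is a real polynomial of degree at most $d-1$, at most $\lfloor(d-1)/2\rfloor$ of these frequencies can be lost; this gives at least $L/2-1-\lfloor(d-1)/2\rfloor$ constraints and the total count $d+P<L$ with $P=L/2+1+\lfloor(d-1)/2\rfloor$, which is equivalent to $d<(L-1)/3$. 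In that direction the genericity question you flag as ``the only real obstacle'' is settled by an explicit polynomial-root count, instead of an unproved transversality claim about the determinantal strata of the structured matrix $\mtx{M}(\vct{y})$ and an unproved assertion that the residual affine system pins $\vct{v}$ to a finite set on every stratum.
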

In other words, the set of counter examples to the unicity of \eqref{eq:formulationInpainting} has strictly less than $L$ degrees of freedom in $\R^d$, and is hence of measure zero.

\begin{proof} We denote by $\mathcal{E}'$ the set of triplets $(\vct{y},\vct{u},\vct{v})\in\R^{L-d}\times\R^d\times\R^d$ forming a counter example, namely:
\begin{equation}
    \mathcal{E}'=\bigl\{(\vct{y},\vct{u},\vct{v})\in\R^{L-d}\times\R^d\times\R^d \;|\; \vct{v}\ne \vct{u}, \; \abs{ \mtx{\Phi}  [\vct{u};\vct{y}] } = \abs{ \mtx{\Phi}  [\vct{v};\vct{y}] } \bigr\}.
\end{equation}
We will show that this manifold of $\R^{L-d}\times\R^d\times\R^d$ has strictly less that $L$ degrees of freedom; This implies that its projection $\mathcal{E}$ on $\R^{L-d}\times\R^d\equiv\R^L$ also has strictly less that $L$ degrees of freedom. We first prove the following:

\begin{lemma}
\label{lem:bijection}
There is a linear bijection between $\mathcal{E}'$ and the following set:
\begin{equation}
    \label{eq:az_set}
    \mathcal{E}'' = \left\{(\vct{a},\vct{w})\in\R^{d}\times\R^L \;|\; \vct{a}\ne \vct{0}_d,\; \mathcal{R}\left(\overline{\mtx{\Phi}[\vct{a};\vct{0}_{L-d}]}\odot \mtx{\Phi}\vct{w}\right) = \vct{0}_L \right\}
\end{equation}
where, $\mathcal{R}(\cdot)$ denotes the real part of a vector and $\odot$ denotes element-wise product.
\end{lemma}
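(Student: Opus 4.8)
The plan is to produce the bijection via an explicit \emph{linear} change of variables on the ambient spaces and then to check that it matches up the defining constraints of the two sets. Concretely, I would set $\vct{a}:=\vct{u}-\vct{v}$ and $\vct{w}:=[\vct{u}+\vct{v};2\vct{y}]$, i.e. define
\[
T:\R^{L-d}\times\R^d\times\R^d\longrightarrow\R^d\times\R^L,\qquad T(\vct{y},\vct{u},\vct{v})=\bigl(\vct{u}-\vct{v},\;[\vct{u}+\vct{v};2\vct{y}]\bigr).
\]
This is patently linear, and it is an isomorphism of the ambient spaces because it admits an explicit linear inverse: writing $\vct{w}=[\vct{w}_1;\vct{w}_2]$ with $\vct{w}_1\in\R^d$ and $\vct{w}_2\in\R^{L-d}$, one recovers $\vct{u}=\tfrac12(\vct{w}_1+\vct{a})$, $\vct{v}=\tfrac12(\vct{w}_1-\vct{a})$ and $\vct{y}=\tfrac12\vct{w}_2$ (the dimensions are consistent since $(L-d)+d+d=d+L$). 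It then suffices to prove that $T$ carries $\mathcal{E}'$ onto $\mathcal{E}''$.

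The non-degeneracy conditions correspond trivially: $\vct{u}\ne\vct{v}$ iff $\vct{a}=\vct{u}-\vct{v}\ne\vct{0}_d$. For the magnitude constraint, note first that since DFT magnitudes are nonnegative, $\abs{\mtx{\Phi}[\vct{u};\vct{y}]}=\abs{\mtx{\Phi}[\vct{v};\vct{y}]}$ is equivalent to the element-wise equality of \emph{squared} magnitudes. Writing $\vct{p}:=\mtx{\Phi}[\vct{u};\vct{y}]$ and $\vct{q}:=\mtx{\Phi}[\vct{v};\vct{y}]$, I would then use the elementary identity valid for any $s,t\in\C$,
\[
\conj{(s+t)}\,(s-t)=\abs{s}^2-\abs{t}^2+\bigl(\conj{t}s-\conj{s}t\bigr),
\]
in which the bracketed term is purely imaginary; hence $\mathcal{R}\bigl(\conj{(s+t)}(s-t)\bigr)=\abs{s}^2-\abs{t}^2$. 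Applied coordinate by coordinate, the magnitude constraint becomes $\mathcal{R}\bigl(\conj{(\vct{p}+\vct{q})}\odot(\vct{p}-\vct{q})\bigr)=\vct{0}_L$.

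Finally I would exploit the concatenation structure: since the observed block is common to $[\vct{u};\vct{y}]$ and $[\vct{v};\vct{y}]$, we have $[\vct{u};\vct{y}]-[\vct{v};\vct{y}]=[\vct{a};\vct{0}_{L-d}]$ and $[\vct{u};\vct{y}]+[\vct{v};\vct{y}]=\vct{w}$, so by linearity of $\mtx{\Phi}$, $\vct{p}-\vct{q}=\mtx{\Phi}[\vct{a};\vct{0}_{L-d}]$ and $\vct{p}+\vct{q}=\mtx{\Phi}\vct{w}$. Substituting and using $\mathcal{R}(\conj{z_1}z_2)=\mathcal{R}(\conj{z_2}z_1)$ to interchange the two factors, the constraint reads exactly $\mathcal{R}\bigl(\conj{\mtx{\Phi}[\vct{a};\vct{0}_{L-d}]}\odot\mtx{\Phi}\vct{w}\bigr)=\vct{0}_L$, which together with $\vct{a}\ne\vct{0}_d$ is precisely the defining condition of $\mathcal{E}''$; hence $T$ restricts to a bijection $\mathcal{E}'\to\mathcal{E}''$. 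I do not expect a genuine obstacle here: the only delicate points are verifying that $T$ is really invertible and observing that the cross term $\conj{t}s-\conj{s}t$ drops out under $\mathcal{R}(\cdot)$ --- it is exactly this cancellation that turns the quadratic magnitude constraint into the required \emph{real-linear} condition on $(\vct{a},\vct{w})$. It is also worth stressing that $T$ is linear rather than merely a diffeomorphism, since this is what allows $\mathcal{E}''$ to inherit verbatim the manifold and degrees-of-freedom structure that the rest of the proof of the theorem analyzes.
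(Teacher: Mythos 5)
Your proof is correct and takes essentially the same route as the paper: the identical linear change of variables $(\vct{a},\vct{w})=(\vct{u}-\vct{v},[\vct{u}+\vct{v};2\vct{y}])$ with its explicit inverse, and the same cancellation of the purely imaginary cross term that turns the squared-magnitude equality into the real-linear condition on $(\vct{a},\vct{w})$. The only cosmetic difference is that you apply the polarization identity directly to $\mtx{\Phi}[\vct{u};\vct{y}]$ and $\mtx{\Phi}[\vct{v};\vct{y}]$, whereas the paper first splits $\mtx{\Phi}=[\mtx{\Phi}^{(1)},\mtx{\Phi}^{(2)}]$ and cancels the common $\abs{\mtx{\Phi}^{(2)}\vct{y}}^2$ term before recombining --- your version is slightly more streamlined but mathematically identical.
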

\begin{proof}
We horizontally split the DFT matrix as $\mtx{\Phi}=[\mtx{\Phi}^{(1)},\mtx{\Phi}^{(2)}]$ where $\mtx{\Phi}^{(1)}\in\R^{L\times d}$ and $\mtx{\Phi}^{(2)}\in\R^{L\times L-d}$ We have the following chain of equivalences (where $\vct{v}\ne \vct{u}$ is kept implicit):
\begin{align}
    &(\vct{y},\vct{u},\vct{v})\in\mathcal{E}' \\
    \Leftrightarrow & \;
    \abs{ \mtx{\Phi}  [\vct{u};\vct{y}] }^2 = \abs{ \mtx{\Phi}  [\vct{v};\vct{y}]}^2 \\
    %
    \Leftrightarrow & \;
    \abs{\mtx{\Phi}^{(1)}\vct{u}
    + \mtx{\Phi}^{(2)}\vct{y}}^2
    = \abs{\mtx{\Phi}^{(1)}\vct{v}
    + \mtx{\Phi}^{(2)}\vct{y}}^2 \\
    %
    \Leftrightarrow & \;
    \abs{\mtx{\Phi}^{(1)}\vct{u}}^2 
    + 2\mathcal{R}\left(\overline{\mtx{\Phi}^{(1)}\vct{u}}\odot\mtx{\Phi}^{(2)}\vct{y}\right)
    + \abs{\mtx{\Phi}^{(2)}\vct{y}}^2
    = \abs{\mtx{\Phi}^{(1)}\vct{v}}^2 
    + 2\mathcal{R}\left(\overline{\mtx{\Phi}^{(1)}\vct{v}}\odot\mtx{\Phi}^{(2)}\vct{y}\right)
    + \abs{\mtx{\Phi}^{(2)}\vct{y}}^2 \\
    %
    \Leftrightarrow & \;
    \abs{\mtx{\Phi}^{(1)}\vct{u}}^2
    - \abs{\mtx{\Phi}^{(1)}\vct{v}}^2
    + 2\mathcal{R}\left(\overline{\mtx{\Phi}^{(1)}(\vct{u}-\vct{v})}\odot\mtx{\Phi}^{(2)}\vct{y}\right)
    = \vct{0}_L \\
    %
    \Leftrightarrow & \;
    \mathcal{R}\left(
         (\overline{\mtx{\Phi}^{(1)}\vct{u}} - \overline{\mtx{\Phi}^{(1)}\vct{v}}) \odot(\mtx{\Phi}^{(1)}\vct{u} + \mtx{\Phi}^{(1)}\vct{v})
          \right)
    + 2\mathcal{R}\left(\overline{\mtx{\Phi}^{(1)}(\vct{u}-\vct{v})}\odot\mtx{\Phi}^{(2)}\vct{y}\right)
    = \vct{0}_L  \\
    %
    \Leftrightarrow & \;
    \mathcal{R}\left(\overline{\mtx{\Phi}^{(1)}(\vct{u} - \vct{v})}\odot(\mtx{\Phi}^{(1)}\vct{u} + \mtx{\Phi}^{(1)}\vct{v} + 2\mtx{\Phi}^{(2)}\vct{y}\right)
    = \vct{0}_L  \\
    %
    \Leftrightarrow & \;
    \mathcal{R}\left(\overline{\mtx{\Phi}[\vct{u} - \vct{v};\vct{0}_{L-d}]}\odot(\mtx{\Phi}[\vct{u} + \vct{v}; 2\vct{y}]\right)
    = \vct{0}_L  \\
    %
    \Leftrightarrow & \;
    \mathcal{R}\left( \overline{\mtx{\Phi}[\vct{a};\vct{0}_{L-d}]} \odot \mtx{\Phi}\vct{w} \right) = \vct{0}_L
    \;\textrm{where}\; \vct{a} = \vct{u}-\vct{v} \ne \vct{0}_d \; \textrm{and} \; \vct{w} = [\vct{u} + \vct{v}; 2\vct{y}]\in\R^L.
\end{align}
Since the transformation from $(\vct{y},\vct{u},\vct{v})$ to $(\vct{a},\vct{w})$ is linear and bijective, this concludes the proof.
\end{proof}
Based on Lemma~\ref{lem:bijection}, it is sufficient to show that $\mathcal{E}''$ has strictly less than $L$ degrees of freedom. Since the non-zero signal $\vct{a}\in\R^d$ in \eqref{eq:az_set} can be chosen arbitrarily ($d$ degrees of freedom), it remains to show that for a fixed $\vct{a}\ne\vct{0}_d$, the set of $\vct{w}\in\R^L$ such that $(\vct{a},\vct{w})\in\mathcal{E}''$ has strictly less than $L-d$ degrees of freedom. For conciseness, we will only treat here the case where $L$ is even, as the odd case only requires minor adjustments.

Let $\hat{\vct{a}}=\mtx{\Phi}[\vct{a};\vct{0}_{L-d}]$ and  $\hat{\vct{w}}=\mtx{\Phi}\vct{w}$ be the DFTs of $[\vct{a};\vct{0}_{L-d}]$ and $\vct{w}$, indexed by the $L$ discrete frequency numbers $f\in\{-L/2+1,\dots,L/2\}$.
Since the signals $\vct{a}$ and $\vct{w}$ are real-valued, their DFTs are fully determined by their values at non-negative frequencies, two of which are real (at $f=0$ and $f=L/2$), the rest being complex.
For every frequency number $f\in\{1,\cdots,L/2-1\}$ such that $\hat{a}(f)\ne0$, the constraint $\mathcal{R}(\overline{\hat{\vct{a}}} \odot \hat{\vct{w}}) = \vct{0}_L$ fixes the phase of $\hat{w}(f)$ up to $\pm \pi/2$, reducing the degrees of freedom of $\vct{w}$ by 1 (from a total of $L$). Let us now count for how many distinct $f\in\{1,\cdots,L/2-1\}$ we can have $\hat{a}(f)\ne0$. The z-transform of $[\vct{a};\vct{0}_{L-d}]$ is a polynomial of degree at most $d-1$. Hence, this polynomial admits at most $d-1$ roots, and since $\vct{a}$ is real-valued, these roots are either real or come in conjugate pairs. This implies that $\hat{a}(f)$ can be 0 for at most $\lfloor(d-1)/2\rfloor$ distinct $f$ in $\{1,\cdots,L/2-1\}$. Hence, the constraint$\mathcal{R}(\overline{\hat{\vct{a}}} \odot \hat{\vct{w}}) = \vct{0}_L$ enforces at least $L/2-1-\lfloor(d-1)/2\rfloor$ phase constraints on $\hat{\vct{w}}$. Subtracting these constraints from $L$, we get that $\vct{w}$ has at most $P = L/2+1+\lfloor(d-1)/2\rfloor$ degrees of freedom.  By hypothesis, $d<L/3-1$, which implies $P<L-d$ and concludes the proof.
\end{proof}

\section{Conclusion}
We have conducted a theoretical study on the solutions to the problem of signal inpainting from Fourier magnitudes.
We have shown that if the number of missing samples $d$ is strictly less than $(L-1)/3$, where $L$ is the total signal length, then almost all signals containing a subset of $d$ consecutive missing values are uniquely determined from the magnitudes of their Fourier transform.

\bibliographystyle{plain}
\bibliography{biblio.bib}